\newtheorem{theorem}{Theorem}
\newtheorem{example}{Example}
\def\Real{\mathbb{R}}
\begin{document}
\title{
Pooling Design and Bias Correction in DNA Library Screening
}
\begin{authors}
 \vspace*{4mm}
 \author{Takafumi Kanamori}{Nagoya University}{Furocho, Chikusaku, Nagoya 464-8603, Japan}
 {kanamori@is.nagoya-u.ac.jp}\vspace*{3mm}
 \author{Hiroaki Uehara}{Nagoya University}{Furocho, Chikusaku, Nagoya 464-8603, Japan}
 {uehara@jim.math.cm.is.nagoya-u.ac.jp}\vspace*{3mm}
 \author{Masakazu Jimbo}{Nagoya University}{Furocho, Chikusaku, Nagoya 464-8603, Japan}
 {jimbo@is.nagoya-u.ac.jp}
\end{authors}

\begin{abstract}
 We study the group test for DNA library screening based on probabilistic approach. 
 Group test is a method of detecting a few positive items from among a large number of 
 items, and has wide range of applications. 
 In DNA library screening, positive item corresponds to the clone having a specified DNA
 segment, and it is necessary to identify and isolate the positive clones for compiling
 the libraries. 
 In the group test, a group of items, called {\it pool}, is assayed in a lump in order to
 save the cost of testing, and positive items are detected based on the observation from
 each pool. 
 It is known that the design of grouping, that is, 
 pooling design is important to 
 achieve accurate detection. 
 In the probabilistic approach, 
 positive clones are picked up based on the posterior probability.
 Naive methods of computing the posterior, however, involves exponentially many sums, 
 and thus we need a device. 
 Loopy belief propagation (loopy BP) algorithm is one of popular methods to obtain
 approximate posterior probability efficiently. 
 There are some works investigating the relation between the accuracy of the loopy BP and
 the pooling design. Based on these works, 
 we develop pooling design with small estimation bias of posterior probability, and we
 show that the balanced incomplete block design (BIBD) has nice property for our purpose. 
 Some numerical experiments show that the bias correction under the BIBD is useful to
 improve the estimation accuracy. 

\keywords{Group test; Pooling design; Loopy belief propagation; BIB Design. }
\end{abstract}

\section{Introduction}
\label{sec:Introduction}
We study the group test based on a probabilistic approach. Group test is a method of
detecting positive items out of a set of a large number of items, and has wide range of
applications such as blood test or DNA library screening. 

In the context of DNA library screening, our purpose is to identify clones having a
specified DNA fragment from among a collection of DNA segments. Each DNA segment is called
clones. The clone with a specified segment is referred to as {\it positive} clone,
otherwise {\it negative} clone. For large libraries, it is impractical to screen each
clone individually, instead a group of clones, called {\it pool}, is assayed in a
lump. This is said to be group test or pooling experiment. When a pool gives positive
result, the pool contains at least one positive clone, and otherwise all clones are
negative. 
A number of pools are prepared, and outcomes from all pools are assembled to identify
positive clones. 

There are mainly two categories of group test; one is adaptive, and the other is
non-adaptive. In adaptive strategy, the pool is sequentially prepared and the test is
conducted based on the information of previous outcomes. By repeating the test procedure,
we can narrow down the set of positive clones. 
In non-adaptive testing, 
we prepare all pools to be tested before conducting the group test. 
The positive clones are detected based on the outcome of each pool. That is, the 
grouping of clones does not depend on the result of previous testing. 
When the group test for each pool is performed by distinct experimenters, non-adaptive 
method may not be time-consuming compared to adaptive one. 
In this article, we focus on non-adaptive testing. 

In group testing, we have two kind of detecting procedure; 
one is combinatorial and the other is probabilistic. 
In combinatorial group testing, the main issue is to construct the design
of grouping or {\it pooling design} to reduce the number of testing 
without missing the positive clones. 
Combinatorial group testing has been studied by many authors
\citep{d-z.99:_combin_group_testin_and_its_applic_ed,ngo00:_survey_combin_group_testin_algor,
wu04:_molec_biolog_and_poolin_desig}. 
In combinatorial approach, it is often assumed that the maximum number
of positive clones is known and that there is no observation errors or noisy
measurements. 
On the other hand, in probabilistic approach the prior probability for the state of clones
is assumed, and 
posterior probability such that each clone is positive is computed based on the
observation of each pool 
\citep{knill96:_inter_of_poolin_exper_using,bruno95:_effic_poolin_desig_for_librar_screen,
mezard07:_group_testin_with_random_pools,ueharaar:_posit_detec_code_and_its}. 
The main issue is to develop efficient algorithm to compute the posterior probability,
since using naive Bayes formula is computationally demanding.
\cite{knill96:_inter_of_poolin_exper_using} and \cite{ueharaar:_posit_detec_code_and_its} 
have proposed a probabilistic algorithm. 
\citet{knill96:_inter_of_poolin_exper_using} have used 
the Markov Chain Monte Carlo (MCMC) method to obtain the marginal posterior probability,
and \citet{ueharaar:_posit_detec_code_and_its} have exploited the 
{\it loopy belief propagation} (BP) algorithm 
\citep{pearl88:_probab_reason_in_intel_system,mackay99:_good_error_correc_codes_based}
to compute approximate probability. 

Non-adaptive group test with probabilistic approach will be one of the most practical
methods to detect positive clones from among large DNA library. Even in probabilistic
approach, the pooling design is significant to achieve highly accurate estimation of
posterior probability. 
In loopy BP algorithm for the low density parity check (LDPC) coding 
\citep{mackay99:_good_error_correc_codes_based,richardson01:_desig_of_capac_approac_irreg}, 
it has been revealed that the coding design
is closely related to the decoding error of the transmitted code. 
Likewise, the pooling design with some nice property will provide 
accurate estimator of the posterior probability as experimentally shown by  
\citet{ueharaar:_posit_detec_code_and_its}. 
In coding theory, \citet{ikeda04:_infor_geomet_of_turbo_and} have analyzed 
the relation between the coding design and the bias of the estimated posterior
probability. 
We apply their result to improve the accuracy of the group testing. 

The outline of the paper is as follows. In Section
\ref{sec:Prelininaries_DNA_lib_screening} probabilistic description of group testing for 
DNA library screening is presented. In Section \ref{sec:BeliefPropagationAlgorithm} we
introduce loopy belief propagation algorithm, and in Section \ref{sec:Bias_of_BP} we show
the bias the estimated posterior probability according to
\cite{ikeda04:_infor_geomet_of_turbo_and}. 
In Section \ref{sec:PoolingDesign_Bias}, we construct a pooling design resulting in a
small bias. Numerical experiments are presented in Section
\ref{sec:Numerical_Experiments}. Section \ref{sec:ConcludingRemarks} is devoted to
concluding remarks.

\section{Preliminaries of DNA library screening}
\label{sec:Prelininaries_DNA_lib_screening}
On DNA library screening, our purpose is to identify the positive clones out of 
a large DNA library. 
Let $X_i$ be the random variable which stands for the label of the clone $i$ for
$i=1,\ldots, n$, that is, $X_i=1$ for positive and $X_i=0$ for negative. 
The labels of all clones are denoted as the vector $X=(X_1,\ldots, X_n)$. 
We assume that the random variables $X_i,\,i=1,\ldots,n$ are independent. 
The probability such that $X=x\in\{0,1\}^n$ for $x=(x_1,\ldots,x_n)$ is denoted by $p(x)$
or $p(X=x)$. Then, 
the probability $p(x)$ is represented by the factorization of marginal probabilities, that is 
\begin{align*}
p(x)=p_1(x_1)\times\cdots\times p_n(x_n). 
\end{align*}
Since the marginal distribution $p_i(x_i)$ over $\{0,1\}$ is written as the form of 
exponential model $p_i(x_i)\propto\exp\{h_i x_i\}$, the joint probability $p(x)$ is given
as 
\begin{align*}
p(x)=\exp\{h^\top x-\psi_0(h)\},\quad  x\in\{0,1\}^n
\end{align*}
with $h=(h_1,\ldots,h_n)\in\Real^n$, where $\psi_0(h)$ is the normalization factor called
the cumulant generating function. 

In the group test a number of clones are set in a pool and the experiment is conducted
to detect if a positive close is included in the pool. Here the pool is identified by 
a subset of $\{1,\ldots,n\}$, and 
the clone $i$ is included in the pool $r$ if and only if $i\in r$ holds. 
For the pool $r\subset\{1,\ldots,n\}$, let $Z_r$ be the random variable defined by 
\begin{align}
 \label{eqn:def-Z_r}
 Z_r=
 \begin{cases}
  1& ^\exists i\in r,\ X_i=1,\\
  0& \text{otherwise}. 
 \end{cases}
\end{align}
Hence if $Z_r=1$, there is a positive clone in the pool $r$. 
Note that $Z_r$ is also represented as 
\begin{align*}
 Z_r=\max_{i\in r} X_i=1-\prod_{i\in r}(1-X_i). 
\end{align*}
In practice, $Z_r$ is not directly observed. 
The observation of the pool $r$ is usually represented by four levels such that
\begin{align*}
 S_r=
 \begin{cases}
  0 & \text{if the pool $r$ is negative},\\
  1 & \text{if the pool $r$ is weak positive},\\
  2 & \text{if the pool $r$ is medium positive},\\
  3 & \text{if the pool $r$ is strong positive}. 
 \end{cases}
\end{align*}
The response of the experiment is measured by using a fluorescence sign, and 
it is experimentally-confirmed that the conditional probability of $S_r$ given 
$X_i, (i\in r)$ only depends on $Z_r$, not the number of $i\in r$ such that $X_i=1$. 
We assume that the conditional probability of $S_r$ given $Z_r$ is the same for 
all pools. Then, the conditional probability of $S_r=s_r$ given $Z_r=z_r$ is denoted as  
$p(S_r=s_r|Z_r=z_r)$ or $p(s_r|z_r)$. 
In practice $p(S_r=0|Z_r=0)$ and $p(S_r=3|Z_r=1)$ will take larger value than others. 
In the group test usually we prepare a number of pools. Let
$\mathcal{G}=\{r_1,\ldots,r_m\}$ be 
the set of pools used in the group test. Then for each pool $r\in \mathcal{G}$
the observation $s_r\in\{0,1,2,3\}$ is obtained. 
An example of a pooling design is shown in Figure \ref{fig:pooling_design}. 
\begin{figure}[tp]
 \begin{center}
  \includegraphics[scale=0.5]{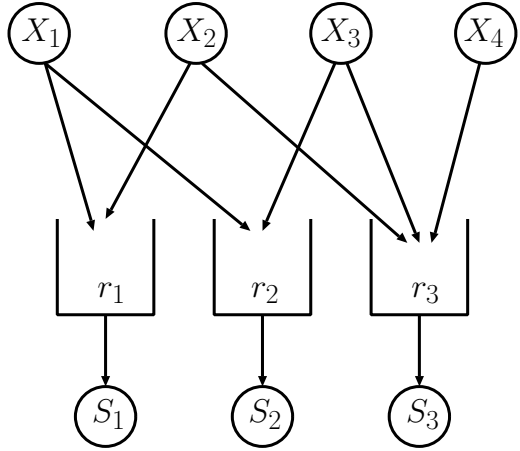}
  \caption{An example of a pooling design. 
  $\mathcal{G}$ is given as $\{\{1,2\},\,\{1,3\},\,\{2,3,4\}\}$. }
  \label{fig:pooling_design}
 \end{center}
\end{figure}

The problem considered in the paper is to infer the label of clones based on the
observation from each pool. 
More precisely, we want to pick up only positive clones out of all clones. 
As a probabilistic approach, the method of maximum a posteriori (MAP) estimate is useful
to detect the positive clones. 
Let $S=(S_1,\ldots,S_m)$ be the random variable for the observation from all pools, and 
\[
p(X=x|S=s)=p(x|s)
\]
be the posterior probability of $X=x=(x_1,\ldots,x_n)$ given $S=s=(s_1,\ldots,s_m)$, 
where $m$ is the number of pools. 
The label pattern $x\in\{0,1\}^n$ maximizing the posterior $p(x|s)$ will provide the set 
of clones which are likely to be positive. 

We represent the posterior $p(x|s)$ by $p(x)$ and $p(s|x)$. 
Using the Bayes formula, we can represent the posterior probability $p(x|s)$ as 
\begin{align*}
 p(x|s)\propto p(s|x)p(x). 
\end{align*}
For the distinct pools $r,r'\in\mathcal{G}$, the observations $s_{r}$ and $s_{r'}$ 
are conditionally independent for given $x$. Hence the probability $p(s|x)$ is decomposed
into the conditional probabilities of $r\in\mathcal{G}$, and then we have 
\begin{align*}
 p(s|x)=\prod_{r\in\mathcal{G}}p(s_r|x). 
\end{align*}
For each observation $s\in\{0,1,2,3\}$, the conditional probability $p(S_r=s|x)$ is
written as
\begin{align*}
 p(S_r=s|x)\propto \exp\{c(s,x)\}
\end{align*}
as the function of $s$, where $c(s,x)$ is a real-valued function. When we compute the
posterior probability of $p(x|s)$, the observations $s_r, r\in\mathcal{G}$ are regarded as
constants, and thus $c(s_r,x)$ is written as $c_r(x)$ as the function of the label pattern
$x\in\{0,1\}^n$. Note that $c_r(x)$ depends only on $z_r$ which is a realized value of $Z_r$ 
defined in \eqref{eqn:def-Z_r}. 
Then, the posterior probability $p(x|s)$ is given as
\begin{align}
 p(x|s)\propto\exp\big\{ h^\top x+\sum_{r\in\mathcal{G}}c_r(x)\big\}. 
 \label{eqn:posterior}
\end{align}
Suppose that the parameter $h$ and the functions $c_r(x),\, r\in\mathcal{G}$ are known or
these are estimated with satisfactory accuracy. 

In general the maximization of $p(x|s)$ in \eqref{eqn:posterior} over $x\in\{0,1\}^n$ is
computationally hard unless the set of pools $\mathcal{G}$ has some special property 
\citep{pearl88:_probab_reason_in_intel_system,cowell07:_probab_networ_and_exper_system}. 
Thus, we take another approach.  
The marginal probability of $x_i$ for $p(x|s)$ is denoted as $p_i(x_i|s)$, that is 
\begin{align}
 p_i(x_i|s)=\sum_{x'\in\{0,1\}^n\,:\,x'_i=x_i} p(x'|s)
 \label{eqn:marginal-posterior}
\end{align}
We think that the clones having large marginal posterior $p_i(X_i=1|s)$ will be
positive. 
Using a threshold for the marginal posterior, 
we will be able to detect the set of positive clones. 
As an example Table \ref{table:marginal_posterior_example} 
shows exact marginal posterior probabilities $p_i(X_i=1|s)$. 
The pooling design in Figure \ref{fig:pooling_design} and 
the observation probability $p(S_r=s|x)$ shown in Table \ref{table:observation_prob} are
used, and the marginal probability $p(X_i=1)$ is set to $0.1$ for all clones. 
We see that the marginal posterior will be useful to detect the positive clones. 

\begin{table}
 \begin{center}
  \caption{An example of marginal posterior probabilities. 
  The pooling design in Figure \ref{fig:pooling_design} and 
  the observation probability $p(S_r=s|x)$ shown in Table \ref{table:observation_prob} are
  used, and the marginal probability $p(X_i=1)$ is set to $0.1$ for all clones. 
  }
 \label{table:marginal_posterior_example}
 \begin{tabular}{|c||c|c|c|c|} 
  \hline 
  $(s_1,s_2,s_3)\in\{0,1,2,3\}^3$ & 
  $p_1(X_1=1|s)$ & $p_2(X_2=1|s)$ & $p_3(X_3=1|s)$ & $p_4(X_4=1|s)$ \\\hline
  $(3,0,0)$           & 0.043 & 0.047 & 0.001 & 0.011 \\ 
  $(2,2,0)$           & 0.853 & 0.019 & 0.019 & 0.009 \\
  $(0,1,3)$           & 0.020 & 0.016 & 0.760 & 0.180 \\
  $(0,0,3)$           & 0.001 & 0.027 & 0.027 & 0.429 \\\hline
 \end{tabular}
 \end{center}
\end{table}

The computation of the marginal posterior is still
hard, since there are exponentially many summands in
\eqref{eqn:marginal-posterior}. Despite this, we can compute an approximate posterior 
probability by applying so-called {\it loopy belief propagation} (loopy BP) algorithm. 
The details of loopy BP is briefly introduced in Section
\ref{sec:BeliefPropagationAlgorithm}.

\section{Loopy Belief Propagation for Computation of Marginal Probability}
\label{sec:BeliefPropagationAlgorithm}
Loopy belief propagation is a method of computing an approximate marginal 
probability, which is very useful in stochastic reasoning
\citep{pearl88:_probab_reason_in_intel_system,cowell07:_probab_networ_and_exper_system}. 
Let $q(x)$ be a joint probability of high dimensional binary variable 
$x=(x_1,\ldots,x_n)\in \{0,1\}^n$. In the group test $q(x)$ corresponds to the posterior
probability $p(x|s)$. 
The computation of the marginal $q_i(x_i)$ involves exponentially many sums. 
To reduce the computational cost, we approximate the joint probability $q(x)$ by a
tractable one. 
Suppose that $q(x)$ is represented by the form of \eqref{eqn:posterior}, 
that is, 
\begin{align}
 \label{eqn:joint-prob}
 q(x)\propto\exp\big\{ h^\top x+\sum_{r\in\mathcal{G}}c_r(x)\big\}
\end{align}
and we use the model 
\begin{align}
 \label{eqn:stat_model_joint-prob}
 \bar{q}(x;\theta) \propto \exp\big\{h^\top x+\theta^\top x\} 
\end{align}
to approximate $q(x)$, 
where $\theta=(\theta_1,\ldots,\theta_n)^\top\in\Real^n$ is an $n$ dimensional column vector. 
The parameter $\theta$ is determined such that the function $\theta^\top x$ is close to
$\sum_{r\in\mathcal{G}}c_r(x)$ up to additive constant. 
Then, the marginal probability of $q(x)$ will be approximately given by 
\begin{align*}
 \bar{q}_i(x_i;\theta)\propto \exp\big\{h_ix_i+\theta_i x_i\big\},\quad i=1,\ldots,n. 
\end{align*}
As a result, 
we can obtain an approximate value of the marginal probability for $q(x)$. 
The loopy BP algorithm provides an efficient method of computing the parameter 
$\theta$. 
Suppose that $\theta$ is decomposed into the sum of parameters 
$\xi_r\in\Real^n,\,r\in\mathcal{G}$, that is, 
\begin{align*}
\theta=\sum_{r\in\mathcal{G}} \xi_r. 
\end{align*}
We suppose that the function $c_r(x)$ is approximated by $\xi_r^\top x$ 
for each pool $r\in\mathcal{G}$. 
When the parameters $\xi_r,\,r\in\mathcal{G}$ are obtained in mid-flow of 
the algorithm, we show how to update these parameters. 
Let $\zeta_r$ be defined as 
\begin{align*}
\zeta_r=\sum_{s\in\mathcal{G}}\xi_s-\xi_r=\theta-\xi_r. 
\end{align*}
When the function $c_r(x)$ is approximated by $\xi_r^\top x$, 
the probability $q(x)$ is also approximated by $\exp\{h^\top x+\zeta_r^\top x+c_r(x)\}$. 
We seek the parameter $\bar{\xi}_r$ such that 
$\exp\{h^\top x+\zeta_r^\top x+\bar{\xi}_r^{\top} x\}$ approximates 
$\exp\{h^\top x+\zeta_r^\top x+c_r(x)\}$ up to the normalization constant. 
The Kullback-Leibler divergence 
\[
{\rm KL}(p,q)=\sum_{x\in\{0,1\}^n} p(x)\log\frac{p(x)}{q(x)}
\]
is used as the discrepancy measure between two probabilities $p$ and $q$ over
$\{0,1\}^n$. 
We consider the following optimization problem: 
\begin{align*}
 &\min_{\xi_r\in\Real^n} {\rm KL}(p,q)\\
 &\phantom{\min}\text{subject to}\ \ 
 p(x)\propto \exp\{h^\top x+\zeta_r^\top x+c_r(x)\},\quad 
 q(x)\propto \exp\{h^\top x+\zeta_r^\top x+\xi_r^\top x\}. 
\end{align*}
By some calculation, we see that the above problem is represented as the following form:
\begin{align}
 \label{eqn:BP-opt}
 \begin{array}{l}
  \displaystyle
   \max_{\xi_r\in\Real^n}
   \sum_{x\in\{0,1\}^n} \exp\{h^\top x+\zeta_r^\top x+c_r(x)\} \vspace*{1mm}
   \displaystyle
   \cdot\bigg\{
   \xi_r^\top x-\sum_{i=1}^n\log(1+\exp\{h_i+\zeta_{ri}+\xi_{ri}\})
   \bigg\}. 
 \end{array}
\end{align}
There are $2^{|r|}$ summands in the function to be optimized, where $|r|$ denotes the 
number of elements in the set $r$. 
When the size of the pool $r$ is not large, the objective function in the optimization
problem above is tractable. 
The parameter $\xi_r$ is updated to $\bar{\xi}_r$ which is the optimal solution of
\eqref{eqn:BP-opt}. 
In the same way, the parameters $\xi_r, r\in\mathcal{G}$ and the sum
$\theta=\sum_{r\in\mathcal{G}}\xi_r$ are updated sequentially. 
The convergent point of $\theta$ is the output of the algorithm, and we obtain the
approximated marginal probability $\bar{q}_i(x_i;\theta), i=1,\ldots,n$. 
The loopy BP algorithm is very useful in practice, though the convergence property of the
algorithm is not theoretically guaranteed under general condition. 

In the literature of DNA library screening, the function $c_r(x)$ depends on the value of
$z_r=\max_{i\in r} x_i$, 
and thus we define
\begin{align*}
 c_r(x) =
 \begin{cases}
  c_{r1}& z_r=1,\\
  c_{r0}& z_r=0.
 \end{cases}
\end{align*}
Then, the objective function of \eqref{eqn:BP-opt} has a simple form, and the updated
parameter $\bar{\xi}_r$ in the loopy BP algorithm is explicitly obtained. 
See \cite{ueharaar:_posit_detec_code_and_its} for details.

\section{Bias of Loopy Belief Propagation}
\label{sec:Bias_of_BP}
According to \cite{ikeda04:_infor_geomet_of_turbo_and} we show the bias introduced by the
loopy BP algorithm in the general setup. 
For each pool $r\in\mathcal{G}$, let $c_r(x)$ be any real-valued function depending only on 
$x_i, i\in r$, and $q(x)$ be a probability on $\{0,1\}^n$ defined as the form of
\eqref{eqn:joint-prob}.  
Let $q_i(x_i)$ be the marginal probability of $q(x)$. 
We use the statistical model \eqref{eqn:stat_model_joint-prob} to approximate the joint
probability $q(x)$. 
Let $\bar{q}(x;\theta_0)\propto\exp\{h^\top x+\theta_0^\top x\}$ be the convergent 
joint probability computed by the loopy BP algorithm applied to $q(x)$. 
Usually the estimated marginal $\bar{q}_i(x_i;\theta_0)$ is not equal to the true marginal
probability $q_i(x_i)$, 
and the difference $q_i(x_i)-\bar{q}_i(x_i;\theta_0)$ is said to be {\it bias}. 
\citet{ikeda04:_infor_geomet_of_turbo_and} have analyzed the bias of 
the loopy BP algorithm, and obtained the asymptotic formula such that 
\begin{align}
 \label{eqn:LBP-asympt-bias}
 q_i(1)-\bar{q}_i(1;\theta_0)~=~
 \frac{1}{2}\sum_{\substack{r,s\in\mathcal{G}\\ r\neq s}}B_{rsi}\ 
 +\text{(higher order terms)}, 
\end{align}
where $B_{rsi}$ is related to a geometrical curvature of statistical model
$\bar{q}(x;\theta)$. 

To show the definition of $B_{rsi}$, we need to define the matrices 
$g_{ij}, g_{ir},\widetilde{g}_{ir}$ and the third order tensor $T$. 
Let $\bar{x}_i$ and $\bar{c}_r$ be the expectation of $x_i$ and $c_r(x)$ under
$\bar{q}(x;\theta_0)$, that is
\begin{align*}
 \bar{x}_i=\sum_{x_i}x_i\bar{q}_i(x_i;\theta_0),\qquad
 \bar{c}_r=\sum_{x}c_r(x)\bar{q}(x;\theta_0). 
\end{align*}
Note that the expectation $\bar{x}_i$ is equal to $q_i(1;\theta_0)$, since $x_i$ is the
binary variable. 
The matrix $g_{ij},i,j=1,\ldots,n$ is the Fisher information matrix of the model
$\bar{q}(x;\theta)$ at $\theta=\theta_0$,  
\begin{align*}
 g_{ij}
 =
 \sum_{x_i,x_j} (x_i-\bar{x}_i) (x_j-\bar{x}_j)
 \bar{q}(x;\theta_0) 
 =
 \delta_{ij}\sum_{x_i} (x_i-\bar{x}_i)^2 \bar{q}_i(x_i;\theta_0)
 = \delta_{ij} \bar{x}_i(1-\bar{x}_i), 
\end{align*}
where $\delta_{ij}$ is the Kronecker's delta function such that 
$\delta_{ij}=1$ for $i=j$ and otherwise $\delta_{ij}=0$. 
Likewise the matrix $g_{ir}$ for $i=1,\ldots,n,\ r\in\mathcal{G}$ is defined by
\begin{align*}
 g_{ir}=\sum_{x}(x_i-\bar{x}_i)(c_r(x)-\bar{c}_r) \bar{q}(x;\theta_0), 
\end{align*}
and let $\widetilde{g}_{ir}$ be $\widetilde{g}_{ir}=g_{ir}/g_{ii}$. Moreover let the third
tensor $T$ be 
\begin{align*}
 T_{ijk}&=\sum_{x}(x_i-\bar{x}_i)(x_j-\bar{x}_j)(x_k-\bar{x}_k)\bar{q}(x;\theta_0),
 \quad i,j,k=1,\ldots,n,\\
 T_{ijr}&=\sum_{x}(x_i-\bar{x}_i)(x_j-\bar{x}_j)(c_r(x)-\bar{c}_r)\bar{q}(x;\theta_0), 
 \quad i,j=1,\ldots,n,\ r\in\mathcal{G},\\
 T_{irs}&=\sum_{x}(x_i-\bar{x}_i)(c_r(x)-\bar{c}_r)(c_s(x)-\bar{c}_s)\bar{q}(x;\theta_0), 
 \quad i=1,\ldots,n,\ r,s\in\mathcal{G}. 
\end{align*}
Then $B_{irs}$ is defined as
\begin{align}
 \label{eqn:def_Btensor}
 B_{rsi}
 =-T_{irs}-\sum_{j,k=1}^n T_{ijk}\widetilde{g}_{jr}\widetilde{g}_{ks}
 +\sum_{j=1}^{n}(T_{ijr}\widetilde{g}_{js}+T_{ijs}\widetilde{g}_{jr}),\quad
 i=1,\ldots,n,\ \ r,s\in\mathcal{G}. 
\end{align}
Once we obtain the approximate joint probability $\bar{q}(x;\theta_0)$, 
we can compute $B_{rsi}$ without knowing the target probability $q(x)$. 
Thus, according to \eqref{eqn:LBP-asympt-bias} the bias is corrected by adding
$\sum_{r\neq s}B_{rsi}/2$ to $\bar{q}_i(1;\theta_0)$.

\section{Relation between Pooling Design and Bias of Loopy BP Algorithm}
\label{sec:PoolingDesign_Bias}
We show some properties of $B_{rsi}$ defined in \eqref{eqn:def_Btensor}. 
Let $c_r(x)$ be the function on $\{0,1\}^n$ depending only on $x_i, i\in r$, 
then $c_r(x)$ is represented as the form of 
\begin{align}
 c_r(x)~=~h_r+\sum_{\ell}b_{r\ell}\prod_{i\in r}(x_i-a_{r\ell i}),\quad 
 h_r,\ b_{r\ell}\in\Real,\ a_{r\ell i}\in[0,1]. 
 \label{eqn:c_function_canonical_form}
\end{align}
This fact is shown below. 
Let $r=\{i_1,\ldots,i_{|r|}\}\subset\{1,\ldots,n\}$, and $\bar{x}$ be 
$\bar{x}=(\bar{x}_1,\ldots,\bar{x}_{|r|})\in \{0,1\}^{|r|}$, then 
we define $c_r(x)$ by
\begin{align}
 c_r(x)
 =
 \sum_{\bar{x}\in\{0,1\}^{|r|}}\ b_{\bar{x}}\prod_{j=1}^{|r|}
 \big(1-(x_{i_j}-\bar{x}_{j})^2\big)
 =
 \sum_{\bar{x}\in\{0,1\}^{|r|}}
 \bigg[b_{\bar{x}}\prod_{k=1}^{|r|}(2\bar{x}_{k}-1)\bigg]
 \prod_{j=1}^{|r|}\big(x_{i_j}-(1-\bar{x}_{j})\big). 
 \label{eqn:cr-other-expression}
\end{align}
The function $c_r(x)$ has the form of \eqref{eqn:c_function_canonical_form}, and  
the variable $x\in\{0,1\}^n$ satisfying $x_{i_j}=\bar{x}_j$ for $j=1,\ldots,|r|$ is 
mapped to $b_{\bar{x}}\in\Real$. 
By varying $b_{\bar{x}}$ any function over $\{0,1\}^n$ can be represented by the form
above. Though the parameter $a_{r\ell i}$ in \eqref{eqn:c_function_canonical_form} can be
restricted to the binary set $\{0,1\}$, we allow the mild condition $a_{r\ell i}\in[0,1]$
for convenience. 

\begin{example}
 \label{example:c_r_grouptest_LDPC}
For the group test 
\[
c_r(x)=\rho_r\cdot \max_{i\in r}x_i=\rho_r\{1-(-1)^{|r|}\prod_{i\in r}(x_i-1)\}
\]
 is used. 
 For the low density parity check (LDPC) codes the function 
\[
c_r(x)=\rho_r\cdot\prod_{i\in r}(1-2x_i)=\rho_r(-2)^{|r|}\prod_{i\in r}(x_i-1/2)
\]
is exploited. In the above, the coefficient $\rho_r$ determines the intensity contributed
from the pool $r\in\mathcal{G}$. 
\end{example}

First, we show the condition that the bias vanishes. 
\begin{theorem}
 \label{theorem:packing-optimal}
 Let $c_r$ be real-valued function over $\{0,1\}^n$ depending only on the variables 
 $x_i,\,i\in r$.  
 Let $r,s$ be distinct subsets of $\{1,\ldots,n\}$. 
 Then, for any functions $c_r(x),\,c_s(x)$ and any $i=1,\ldots,n$, 
 $B_{rsi}$ vanishes if $|r\cap s|\leq 1$. 
\end{theorem}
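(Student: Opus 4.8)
The whole argument rests on one structural fact: the loopy-BP fixed point $\bar q(x;\theta_0)\propto\exp\{h^\top x+\theta_0^\top x\}$ factorizes over the coordinates, so under $\bar q(x;\theta_0)$ the variables $x_1,\dots,x_n$ are \emph{mutually independent}. Write $y_i=x_i-\bar x_i$ for the centered variables and $u_r=c_r(x)-\bar c_r$; every quantity $g,\widetilde g,T,B$ is an expectation of products of the $y_i$ and $u_r$ under $\bar q(x;\theta_0)$. I will assume $\bar x_i\in(0,1)$, so that $g_{ii}\neq0$ and $\widetilde g_{ir}$ is well defined. First I would record the sparsity consequences of independence: since $\mathbb{E}[y_i]=0$, the moment $T_{ijk}=\mathbb{E}[y_iy_jy_k]$ vanishes unless $i=j=k$, and $g_{ir}=\mathbb{E}[y_iu_r]$ vanishes unless $i\in r$ (because $u_r$ depends only on $\{x_\ell:\ell\in r\}$); hence $\widetilde g_{ir}=0$ for $i\notin r$.

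Next I would expand each centered potential in the orthogonal product basis $\chi_A=\prod_{\ell\in A}y_\ell$, $A\subseteq r$, which under the product measure $\bar q(x;\theta_0)$ is orthogonal and spans all functions on $\{0,1\}^{r}$. The mean-zero condition kills the $A=\varnothing$ term, and a one-line computation shows the singleton coefficient of $u_r$ equals exactly $\widetilde g_{\ell r}=g_{\ell r}/g_{\ell\ell}$. This gives the decomposition $u_r=\sum_{\ell\in r}\widetilde g_{\ell r}\,y_\ell+v_r$, where the residual $v_r=\sum_{A\subseteq r,\,|A|\ge2}\hat v_{r,A}\chi_A$ is orthogonal to the constant and to every $y_\ell$.

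The heart of the proof is to substitute this decomposition into \eqref{eqn:def_Btensor} and watch it collapse. Using the sparsity facts to rewrite $\sum_{j,k}T_{ijk}\widetilde g_{jr}\widetilde g_{ks}$, and expanding $T_{irs}$, $T_{ijr}$, $T_{ijs}$ through the relation $\mathbb{E}[y_iy_ju_r]=\sum_\ell\widetilde g_{\ell r}T_{ij\ell}+\mathbb{E}[y_iy_jv_r]$, I expect the three ``linear'' pieces (the $T_{ijk}$ double sum and the two cross sums carrying one factor $v$) to cancel in matching pairs, leaving the clean identity
\begin{align*}
 B_{rsi}=-\,\mathbb{E}_{\bar q}\!\left[\,y_i\,v_r\,v_s\,\right].
\end{align*}
This cancellation — tracking the symmetry of $T_{ijk}$ and correctly pairing the cross terms, where the specific coefficients $-1,-1,+1,+1$ of \eqref{eqn:def_Btensor} are used — is the step I expect to be the main obstacle. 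The payoff is conceptually transparent: only the genuinely nonlinear parts $v_r,v_s$ survive, which is precisely the information-geometric statement that the linear model \eqref{eqn:stat_model_joint-prob} absorbs all first-order structure of the potentials.

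It then remains to show $\mathbb{E}_{\bar q}[y_iv_rv_s]=0$ whenever $|r\cap s|\le1$. Expanding $v_r$ and $v_s$ in the basis above produces a sum of terms $\hat v_{r,A}\hat v_{s,B}\,\mathbb{E}[\chi_{\{i\}}\chi_A\chi_B]$ with $A\subseteq r$, $B\subseteq s$, $|A|,|B|\ge2$; since $\mathbb{E}[y_\ell]=0$, such an expectation vanishes unless every index appears with multiplicity at least two in the multiset $\{i\}\cup A\cup B$. A short counting argument then finishes the job: a surviving term forces $A\setminus\{i\}\subseteq r\cap s$ and $B\setminus\{i\}\subseteq r\cap s$, so with $|r\cap s|\le1$ and $|A|,|B|\ge2$ we must have $i\in A\cap B$ and $A=B=\{i,t\}$ for a single $t\in r\cap s$ with $t\neq i$; but then $i$ and $t$ are two distinct elements of $r\cap s$, contradicting $|r\cap s|\le1$. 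Hence no term survives, $B_{rsi}=0$, and the theorem follows (the case $r\cap s=\varnothing$ being the degenerate instance where already $A\subseteq\{i\}$ is impossible for $|A|\ge2$).
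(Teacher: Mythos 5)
Your proof is correct, and it takes a genuinely different route from the paper's. The paper argues by bilinearity of $B_{rsi}$ in $c_r-\bar c_r$ and $c_s-\bar c_s$ to reduce to monomial potentials $c_r(x)=\prod_{i\in r}(x_i-a_{ri})$, then explicitly computes every building block ($g_{ir}$, $\widetilde g_{ir}$, $T_{ijk}$, $T_{ijr}$, $T_{irs}$) for that special form and assembles a closed-form expression for $B_{rsi}$ in which all surviving sums run over subsets $v\subseteq r\cap s$ with $|v|\ge 2$, so they are empty when $|r\cap s|\le 1$. You instead exploit the product structure of $\bar q(x;\theta_0)$ to expand $u_r=c_r-\bar c_r$ in the orthogonal basis $\chi_A=\prod_{\ell\in A}(x_\ell-\bar x_\ell)$, peel off the linear part (whose coefficients are exactly $\widetilde g_{\ell r}$), and reduce the whole of \eqref{eqn:def_Btensor} to the identity $B_{rsi}=-\mathbb{E}[y_i v_r v_s]$ with $v_r$ supported on $|A|\ge 2$; the step you flag as the main obstacle is in fact routine and does go through — substituting $u_r=\sum_\ell\widetilde g_{\ell r}y_\ell+v_r$ into the four terms of \eqref{eqn:def_Btensor} gives coefficients $(-1-1+2)$ on the $\sum_{j,k}T_{ijk}\widetilde g_{jr}\widetilde g_{ks}$ piece and $(-1+1)$ on each cross piece $\sum_j\widetilde g_{jr}\mathbb{E}[y_iy_jv_s]$, leaving only $-\mathbb{E}[y_iv_rv_s]$. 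Your multiplicity-counting finish is also sound: a nonvanishing term forces $A\setminus\{i\}\subseteq r\cap s$ and $B\setminus\{i\}\subseteq r\cap s$, hence $A=B=\{i,t\}$ with both $i$ and $t$ in $r\cap s$, contradicting $|r\cap s|\le 1$. What your approach buys is conceptual transparency (only the genuinely nonlinear parts of the potentials contribute to the leading bias) and a proof that avoids the case-by-case tensor computations; what the paper's approach buys is the explicit formula for $B_{rsi}$, which it reuses in Appendix B to prove the quantitative bound of Theorem \ref{theorem:bias-upper-bound}. The only hypothesis you should state explicitly (as you do) is $\bar x_i\in(0,1)$, which the paper also needs implicitly for $\widetilde g_{ir}$ to be defined.
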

Theorem \ref{theorem:packing-optimal} is a direct conclusion of Theorem 7 in
\cite{ikeda04:_infor_geomet_of_turbo_and}. 
The proof is deferred to appendix \ref{appendix:B_calc} to show the explicit form of
$B_{rsi}$. 
Let the packing design be the family of sets $\mathcal{G}$ satisfying  $|r\cap s|\leq 1$
for any $r,s\in\mathcal{G}$, then Theorem \ref{theorem:packing-optimal} denotes that 
for the packing design the dominant bias term of loopy BP algorithm vanishes. 
The packing design is used in the design of group test
\citep{ueharaar:_posit_detec_code_and_its} and also in the LDPC code 
\citep{mackay99:_good_error_correc_codes_based}. 
It is numerically shown that the accuracy of approximate probability is superior to 
other designs with $|r\cap s|\geq 2$. 
In coding theory, lots of designs of low density parity check
(LDPC) code have been intensively studied, and the packing design is known as good
error-correcting code
\citep{ikeda04:_infor_geomet_of_turbo_and,mackay99:_good_error_correc_codes_based}. 
In Theorem \ref{theorem:packing-optimal} these results are extended to any function 
$c_r, r\in\mathcal{G}$. 

We consider the bias term $B_{rsi}$ for $|r\cap s|\geq 2$. 
\begin{theorem}
\label{theorem:bias-upper-bound}
 Let $c_r$ and $c_s$ be functions with the form of \eqref{eqn:c_function_canonical_form}, 
 and suppose that there exists a constant $C$ such that the coefficients 
 $b_{r\ell}, b_{s\ell}$  satisfy 
 \begin{align*}
  \sum_{\ell,\ell'}\big|b_{r\ell} b_{s\ell'}\big| \leq C.
 \end{align*}
 Let $\bar{x}_i$ be the expectation of $x_i$ under the probability 
 $\bar{q}(x;\theta_0)\propto \exp\{h^\top x+\theta_0^\top x\}$ and 
 $\delta$ be a real number
 satisfying 
\begin{align*}
 0<|\bar{x}_i-a_{r\ell i}| \leq \delta <1,\quad
 0<|\bar{x}_i-a_{s\ell i}| \leq \delta <1, 
\end{align*}
for any $i,\ell,r,s$. 
Then, the intensity of $B_{rsi}$ 
is bounded above as follows:
\begin{align}
 |B_{rsi}| 
 \leq 
 C\cdot\frac{\delta^{|r|+|s|-2}}{2}
 \left(1+\frac{1}{4\delta^2}\right)^{|r\cap s|}
\label{eqn:B_upper-bound}
\end{align}
\end{theorem}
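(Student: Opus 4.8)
The plan is to exploit that the approximating law $\bar{q}(x;\theta_0)\propto\exp\{h^\top x+\theta_0^\top x\}$ is a product of independent Bernoulli marginals, so every moment entering the tensors $T$ and the metric $g$ factorizes. First I would center everything: write $y_p=x_p-\bar{x}_p$, which under $\bar{q}$ are independent, mean zero, with $E[y_p^2]=g_{pp}=\bar{x}_p(1-\bar{x}_p)$ and $E[y_p^3]=T_{ppp}$, and set $u_{r\ell p}=\bar{x}_p-a_{r\ell p}$, so that the hypothesis reads $|u_{r\ell p}|\le\delta<1$. Substituting $x_p-a_{r\ell p}=y_p+u_{r\ell p}$ into \eqref{eqn:c_function_canonical_form} and expanding the product gives $c_r(x)-\bar{c}_r=\sum_\ell b_{r\ell}\sum_{\emptyset\neq A\subseteq r}\big(\prod_{p\in A}y_p\big)\prod_{p\in r\setminus A}u_{r\ell p}$, from which $\widetilde{g}_{ir}$, $T_{iii}$, $T_{ijr}$ and $T_{irs}$ can all be written out explicitly.

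The key structural observation is that, by independence, $E[\prod_p y_p^{m_p}]=\prod_p E[y_p^{m_p}]$ vanishes as soon as some index occurs with multiplicity exactly one (because $E[y_p]=0$). Applied to $T_{irs}=E[y_i(c_r-\bar{c}_r)(c_s-\bar{c}_s)]$, this forces every index other than the distinguished $i$ to be paired, which is possible only by matching an index taken from the $c_r$-expansion with the same index taken from the $c_s$-expansion, hence only for indices of $r\cap s$. Thus $T_{irs}$ is a sum indexed by a contracted overlap set $B\subseteq r\cap s$. Feeding this, together with the facts that $T_{ijk}$ is supported on $i=j=k$ and $\widetilde{g}_{jr}$ on $j\in r$, into \eqref{eqn:def_Btensor}, the lowest-order contributions (those with $B=\emptyset$, and the doubled-$i$ terms with $|B|=1$) cancel against the correction terms $-\sum_{j,k}T_{ijk}\widetilde{g}_{jr}\widetilde{g}_{ks}$ and $\sum_j(T_{ijr}\widetilde{g}_{js}+T_{ijs}\widetilde{g}_{jr})$, and $B_{rsi}$ collapses to a sum over nonempty overlap sets $B\subseteq r\cap s$. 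In particular, when $|r\cap s|\le1$ no such nonempty $B$ exists and $B_{rsi}=0$, which is exactly Theorem \ref{theorem:packing-optimal}; this overlap-indexed expression is the explicit form of $B_{rsi}$ computed in the appendix.

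With this form in hand, the bound follows by estimating the baseline term (smallest admissible $B$) and accounting for each additional contracted index. In every term each surviving $u$-factor has absolute value at most $\delta$, the distinguished index $i$ contributes $g_{ii}$ or $T_{iii}$, both bounded by $1/4$, and the coefficient sum is controlled by $\sum_{\ell,\ell'}|b_{r\ell}b_{s\ell'}|\le C$; this makes the baseline at most $\tfrac{C}{4}\delta^{|r|+|s|-2}$. Placing an overlap index $q$ into $B$ replaces the two factors $u_{r\ell q}u_{s\ell'q}$ (of size at most $\delta^2$) by the single second moment $g_{qq}\le1/4$, i.e. multiplies the estimate by at most $\tfrac{1}{4\delta^2}$. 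Summing over all subsets $B\subseteq r\cap s$ therefore produces the factor $\prod_{q\in r\cap s}\big(1+\tfrac{1}{4\delta^2}\big)=\big(1+\tfrac{1}{4\delta^2}\big)^{|r\cap s|}$, which, combined with the prefactor $\tfrac{C}{4}\delta^{|r|+|s|-2}$, yields the right-hand side of \eqref{eqn:B_upper-bound}.

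The hard part will be the combinatorial bookkeeping: identifying exactly which index coincidences survive the mean-zero constraint (the case split according to whether $i$ is tripled, doubled, or unpaired in the overlap), and checking that the lowest-order $|B|\le1$ pieces are indeed removed by the correction terms. Once the overlap-indexed expression is in place the estimate is routine; the only remaining care is in combining the surviving families of terms (the doubled-$i$ families are smaller by a factor of order $\delta<1$) so that the accumulated constant does not exceed $1/2$, for which the crude moment bound $1/4$ leaves ample slack.
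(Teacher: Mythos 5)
Your proposal is correct and follows essentially the same route as the paper: it derives the explicit overlap-indexed form of $B_{rsi}$ (the content of Appendix \ref{appendix:B_calc}, using moment factorization under the product measure $\bar{q}(x;\theta_0)$ and the cancellation of the low-order terms against the corrections in \eqref{eqn:def_Btensor}), then bounds each surviving term by $\delta$ per uncontracted index and $\bar{x}_j(1-\bar{x}_j)\le 1/4$ per contracted one, sums the resulting geometric-binomial series to get $(1+\tfrac{1}{4\delta^2})^{|r\cap s|}$, and finally invokes bilinearity of $B_{rsi}$ in $c_r-\bar{c}_r$ and $c_s-\bar{c}_s$ to absorb the coefficients into $C$. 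This matches Appendix \ref{appendix:Upperbound_Brsi}, including the case split $i\in r\setminus s$ versus $i\in r\cap s$ and the observation that the slack between $1/4$ and $1/2$ accommodates the extra family of terms in the latter case.
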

The proof is shown in appendix \ref{appendix:Upperbound_Brsi}. It is easy to see the
right-hand of  
\eqref{eqn:B_upper-bound} is increasing function of $\delta>0$. 

\begin{example}
 The bias term in the group test is shown. The function $c_r$ is defined as 
 $\rho_r(1-\prod_{i\in r}(1-x_i))$ as shown in Example \ref{example:c_r_grouptest_LDPC}.
 Suppose $\bar{x}_i=\bar{x}$ holds for all $i=1,\ldots,n$. 
 Then the bias term $B_{rsi}$ for $i\in r\setminus s$ is
 given as   
 \begin{align*}
  |B_{rsi}|
  &~=~
  |\rho_r\rho_s|
  \bar{x}(1-\bar{x})(1-\bar{x})^{|r|+|s|-1}
  \left\{
  \left(1+\frac{\bar{x}}{1-\bar{x}}\right)^{|r\cap s|}-1-|r\cap s|\frac{\bar{x}}{1-\bar{x}}
  \right\}\\
  &~\leq~
  |\rho_r\rho_s|
  \frac{(1-\bar{x})^{|r|+|s|-2}}{2}\left(1+\frac{1}{4(1-\bar{x})^2} \right)^{|r\cap s|}. 
 \end{align*}
 The bias $B_{rsi}$ for $i\in r\cap s$ is also computed in the same way. 
 It is verified that $|B_{rsi}|$ vanishes for $|r\cap s|\leq 1$. 
 When $|r|$ and $|s|$ are fixed, minimization of $|r\cap s|$ will contribute
 to the reduction of the bias. 
\end{example}
\begin{example}
 Let $\bar{x}_i=\bar{x}$ for $i=1,\ldots,n$. 
 For the LDPC, the function 
 $c_r(x)=\rho_r (-2)^{|r|}\prod_{i\in r}(x_i-1/2)$ is used. 
 Then, $|B_{rsi}|$ for $i\in r\setminus s$ is given as 
 \begin{align*}
   |B_{rsi}|
  &=
  2|\rho_r\rho_s|\bar{x}(1-\bar{x})|2\bar{x}-1|^{|r|+|s|-1}\left\{
  \left(1+\frac{\bar{x}(1-\bar{x})}{(\bar{x}-1/2)^2}\right)^{|r\cap s|}-1
  -|r\cap s|\frac{\bar{x}(1-\bar{x})}{(\bar{x}-1/2)^2}\right\}\\
  &\leq
  |\rho_r\rho_s|\frac{|2\bar{x}-1|^{|r|+|s|-2}}{2}
  \left(1+\frac{1}{(2\bar{x}-1)^2}\right)^{|r\cap s|}, 
 \end{align*}
 It is verified that $|B_{rsi}|$ vanishes for $|r\cap s|\leq 1$. 
 When $\bar{x}\neq 1/2$, the bias $|B_{rsi}|$ is increasing in 
 $|r\cap s|$ when the size of pools $|r|$ is fixed.
 Thus minimization of $|r\cap s|$ is important to reduce the bias. 
\end{example}

The dominant bias is represented as the sum of $B_{rsi}$. 
We assume that the constants $\delta$ and $C$ in Theorem \ref{theorem:bias-upper-bound}
are also upper bounds for any pair of $r,s\in\mathcal{G}$. 
Suppose that the size of subset is fixed, i.e. $|r|=d$, and let $m=|\mathcal{G}|$. 
Then, an upper bound of the bias is given as 
\begin{align*}
\bigg|\frac{1}{2}\sum_{r\neq s} B_{rsi}\bigg|
\leq 
\frac{Cm(m-1)}{2}\cdot\delta^{2d-2}
\left(1+\frac{1}{4\delta^2}\right)^{\displaystyle\max_{r,s:r\neq s}|r\cap s|}. 
\end{align*}
Suppose that $C$ does not significantly depend on the pooling design. 
Then, the pooling design minimizing $\max_{r,s:r\neq s}|r\cap s|$ will lead a small
estimation bias when we use loopy BP algorithm to compute the approximate posterior
probability. 
In the group test $C$ is almost independent of the pooling design, when the size of the
pool, $|r|$, is fixed. Indeed we can choose $C=\max_{r,s}|\rho_r\rho_s|$, where $\rho_r$ 
is not significantly depend on the pooling design. 
In terms of the minimization of $\max_{r\neq s}\,|r\cap s|$,
we have the following theorem. 
\begin{theorem}
 \label{theorem:optimal_pooling_design}
 For fixed integers $m, n$ and $d$ we consider the optimization problem
 \begin{align}
  \label{eqn:min-max-overlap}
  &\min_{\mathcal{G}}\ \max_{r,s\in\mathcal{G}:r\neq s}\ |r\cap s|,\quad
  \text{\rm subject to}\   |r|=d,\ ^\forall r\in\mathcal{G},
 \end{align}
 where $\mathcal{G}$ consists of $m$ subsets of $\{1,\ldots,n\}$. 
 Suppose that there exists a pooling design $\bar{\mathcal{G}}$ satisfying 
 the constraint of \eqref{eqn:min-max-overlap} and the condition that 
 \begin{align}
  \begin{array}{l}
   \displaystyle
  i)\, \  |r\cap s|= \bar{\lambda}\ \text{or}\ \bar{\lambda}-1\  \text{for all\ $r,s\in
  \bar{\mathcal{G}},\ r\neq s$},\vspace*{1mm}\\ 
   \displaystyle   
  ii)  \  |\{r\in\bar{\mathcal{G}}~|~i\in r\}|=\bar{k}\ \text{or}\ \bar{k}-1
  \  \text{for all\ $i=1,\ldots,n$}, 
  \end{array}
  \label{eqn:opt-min-max-design}
 \end{align}
 where $\bar{\lambda}= \max_{r,s\in\bar{\mathcal{G}}:r\neq s}\ |r\cap s|$ and 
 $\bar{k}=\lceil{md/n}\rceil$. Then the pooling design $\bar{\mathcal{G}}$ is an optimal
 solution of \eqref{eqn:min-max-overlap}. 
\end{theorem}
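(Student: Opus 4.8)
The plan is to reduce the optimization to a double-counting identity together with a convexity (smoothing) argument on the replication numbers, and then to close the remaining gap using the two structural hypotheses on $\bar{\mathcal{G}}$. For any design $\mathcal{G}$ and any element $i\in\{1,\ldots,n\}$, write $k_i=|\{r\in\mathcal{G}:i\in r\}|$ for its replication number. Because every block has size $d$, counting incidences in two ways yields $\sum_{i=1}^n k_i=\sum_{r\in\mathcal{G}}|r|=md$, which is fixed by the constraint. The first key step is the identity
\[
\sum_{\substack{r,s\in\mathcal{G}\\ r\neq s}}|r\cap s|=\sum_{i=1}^n k_i(k_i-1),
\]
obtained by summing over each $i$ the number of ordered pairs of distinct blocks that both contain $i$.

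Since $\bar{\mathcal{G}}$ is feasible with objective value $\bar\lambda$, the optimal value of \eqref{eqn:min-max-overlap} is at most $\bar\lambda$, so it suffices to show that every feasible design has maximum pairwise intersection at least $\bar\lambda$. I would argue by contradiction: suppose some feasible $\mathcal{G}$ satisfies $|r\cap s|\le\bar\lambda-1$ for all distinct $r,s$. Then the identity bounds the left-hand side by $m(m-1)(\bar\lambda-1)$. To lower-bound $\sum_i k_i(k_i-1)=\sum_i k_i^2-md$ uniformly over all feasible designs, I would invoke the smoothing lemma: among nonnegative integer sequences with fixed sum $md$, the quantity $\sum_i k_i^2$ is minimized precisely when the $k_i$ take only the two consecutive values $\lfloor md/n\rfloor$ and $\lceil md/n\rceil$. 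This follows from the replacement $(k_a,k_b)\mapsto(k_a-1,k_b+1)$ whenever $k_a\ge k_b+2$, which strictly decreases $\sum_i k_i^2$. By hypothesis (ii) the replication numbers of $\bar{\mathcal{G}}$ take exactly the values $\bar k=\lceil md/n\rceil$ and $\bar k-1$, so $\bar{\mathcal{G}}$ attains this minimum; hence $\sum_i k_i(k_i-1)$ for $\mathcal{G}$ is at least the corresponding sum for $\bar{\mathcal{G}}$.

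It remains to evaluate the intersection sum for $\bar{\mathcal{G}}$ itself via hypothesis (i). Let $\alpha$ denote the number of ordered distinct pairs with $|r\cap s|=\bar\lambda$; by the definition of $\bar\lambda$ we have $\alpha\ge 1$, while the remaining $m(m-1)-\alpha$ pairs each contribute $\bar\lambda-1$, so the sum for $\bar{\mathcal{G}}$ equals $m(m-1)(\bar\lambda-1)+\alpha$. Chaining the three estimates gives
\[
m(m-1)(\bar\lambda-1)\ \ge\ \sum_{\substack{r,s\in\mathcal{G}\\ r\neq s}}|r\cap s|\ \ge\ m(m-1)(\bar\lambda-1)+\alpha,
\]
which forces $\alpha\le 0$, contradicting $\alpha\ge 1$. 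Therefore no feasible design achieves a maximum intersection smaller than $\bar\lambda$, and $\bar{\mathcal{G}}$ is optimal.

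The main obstacle is not a hard computation but checking that the two hypotheses on $\bar{\mathcal{G}}$ are exactly what make both bounds simultaneously tight. Condition (ii) pins the replication numbers of $\bar{\mathcal{G}}$ to the minimizer of $\sum_i k_i^2$, which is what guarantees that the universal lower bound on $\sum_{r\neq s}|r\cap s|$ is genuinely \emph{attained} by $\bar{\mathcal{G}}$ rather than merely approached; condition (i), through the strictly positive surplus $\alpha$, is what pushes this attained value above $m(m-1)(\bar\lambda-1)$. If either condition were dropped, the average intersection of a competitor could in principle dip to $\bar\lambda-1$ and the contradiction would evaporate, so the real care lies in verifying that (i) and (ii) together close the gap between the average-based lower bound and the maximum.
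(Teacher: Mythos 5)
Your proposal is correct and follows essentially the same route as the paper: the double-counting identity $\sum_{r\neq s}|r\cap s|=\sum_i k_i(k_i-1)$, the smoothing argument showing $\sum_i k_i^2$ is minimized by replication numbers taking the two consecutive values guaranteed by condition (ii), and the observation that condition (i) forces the intersection sum of $\bar{\mathcal{G}}$ strictly above $m(m-1)(\bar{\lambda}-1)$. The only cosmetic difference is that you phrase the final step as a contradiction via the surplus $\alpha\geq 1$, whereas the paper argues directly that the average-based lower bound exceeds $\bar{\lambda}-1$ and hence the integer-valued maximum of any feasible design is at least $\bar{\lambda}$.
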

\begin{proof}
 For a fixed pooling design $\mathcal{G}$ let $k_i$ be
 $k_i=|\{r\in\mathcal{G}~|~i\in r\}|$, that is $k_i$ stands for the number of pools
 including the clone $i$. Then we have the equality 
 \begin{align*}
  \sum_{i=1}^{n}k_i=md,\qquad
  \sum_{i=1}^{n}k_i(k_i-1)=
  \sum_{\substack{r,s\in\mathcal{G}\\ r\neq s}}|r\cap s|. 
 \end{align*}
 Since the mean value is less than or equal to the maximum value, we have
 \begin{align*}
  \max_{\substack{r,s\in\mathcal{G}\\ r\neq s}}|r\cap s|
  \geq
  \frac{1}{m(m-1)}\sum_{\substack{r,s\in\mathcal{G}\\ r\neq s}}|r\cap s|
  =\frac{\sum_{i=1}^{n}k_i^2-md}{m(m-1)}
 \end{align*}
 Some calculation leads that 
 the quadratic function $\sum_{i=1}^nk_k^2$ is minimized at 
 $(\bar{k}_1,\ldots,\bar{k}_n)=(\bar{k},\ldots,\bar{k},\bar{k}-1,\ldots,\bar{k}-1)$ 
 under the constraint that $\sum_{i=1}^{n}k_i=md$ for integers $k_1,\ldots,k_n$. 
 Thus for any pooling design $\mathcal{G}$, 
 the objective function in \eqref{eqn:min-max-overlap} is bounded below by 
 $(\sum_{i=1}^{n}\bar{k}_i^2-md)/m(m-1)$ which depends only on $n,m$ and $d$. 
 For the pooling design $\bar{\mathcal{G}}$ satisfying \eqref{eqn:opt-min-max-design}, 
 we have 
\begin{align*}
 \bar{\lambda}
 = \max_{\substack{r,s\in\bar{\mathcal{G}}\\ r\neq s}}|r\cap s|
 \geq \frac{\sum_{i=1}^{n}\bar{k}_i^2-md}{m(m-1)}
 >\bar{\lambda}-1. 
\end{align*}
The last inequality comes from the facts that $|r\cap s|$ is equal to $\bar{\lambda}$ or 
$\bar{\lambda}-1$ and that there exists a pair $r, s$ such that 
$|r\cap s|=\bar{\lambda}$. 
Thus $\bar{\mathcal{G}}$ is an optimal design, since $\bar{\mathcal{G}}$ attains the least
integer which is greater than or equal to the lower bound of the objective function. 
\end{proof}


The pooling design called balanced incomplete block design (BIBD) has the property 
such that in conditions i) and ii) of Theorem \ref{theorem:optimal_pooling_design}
equalities $|r\cap s|=\bar{\lambda}$
and $|\{r\in\mathcal{G}~|~i\in r\}|=\bar{k}$
always hold. 
According to Theorem \ref{theorem:optimal_pooling_design}, 
a BIBD is an optimal solution in the sense that it has the maximum
possible number of clones $n$ for given number of pools $m$ among the designs satisfying 
\eqref{eqn:min-max-overlap} if it exists for specified $n$, $m$ and $d$. 

A BIBD is often called a 2-design.
The existence condition and the construction method of BIBD's have been intensively investigated in the field of
combinatorics 
\citep{beth99:_desig_theor,colbourn07:_handb_of_combin_desig}. 
Among them, constructions based on 
finite fields and finite geometries are well investigated. Also many recursive constructions or composition
methods are developed. Tables of the existing BIBD's for small orders are listed in 
Chapter 2 of \citet{colbourn07:_handb_of_combin_desig}. 
The designs utilized in this paper 
are constructed based on Theorem 2 in 
\cite{wilson72:_cyclot_and_differ_famil_in}. See also Lemma 6.3 in 
\cite{beth99:_desig_theor} for details. 




\section{Numerical Experiments} 
\label{sec:Numerical_Experiments} 
The bias correction is examined in some numerical experiments. 
In the experiment, we specify the number of clones ($n$), the number of pools ($m$) and
the size of pool $|r|$, 
and then construct a pooling design $\mathcal{G}$ satisfying the condition 
$|r\cap s|=\lambda$ for any pair of pools $r,s\in\mathcal{G}$, where $\lambda$ is a
prespecified constant. Then, the group test is conducted by using the pooling design. 

In numerical experiments, the number of clones is set to $n=24, 1314$ or $1552$, 
and the pooling design is prepared based on the balanced incomplete block design. 
Table \ref{table:BIBD_pooling} illustrates the pooling design for each simulation. 
Basically, the same BIB designs are combined to make larger pooling design. 
In order to build the pooling design such that any pair of clones is not assigned
exactly the same pools, we applied randomization technique. 
The priori probability for each 
clone is defined as $p_i(X_i=1)=0.1$ for $n=24$ and $p_i(x_i)=0.002$ for $n=1314$ and $n=1552$. 
As shown in Table \ref{table:observation_prob} the conditional probability of the
observation, $p(S_r=s_r|Z_r=z_r)$, has been estimated by the experiments of an 
actual DNA library screening \citep{knill96:_inter_of_poolin_exper_using}, and thus we use
the probability in our algorithm. 

In the simulation, some positive clones are randomly chosen out of $n$ clones, and the
observations  
$s_r\in\{0,1,2,3\}, r\in \mathcal{G}$ are generated according to the defined
probability. 
The number of positive clones varies from one to
four. Then, we estimate the marginal posterior probability
$p_i(x_i|s)$. 
The estimated probability is compared to the true posterior probability 
computed by the Markov Chain Monte Carlo (MCMC) method
\cite{knill96:_inter_of_poolin_exper_using}. 
Table \ref{table:posterior-exam} shows the estimated result in the descending order of the
marginal posterior probability. In both methods almost the same clones are highly placed. 
Note that the MCMC method is computationally demanding. 
We use the MCMC method in order to obtain precise posterior probability 
which is used to assess the estimated (bias-corrected) posterior probability. 
In the numerical experiments, we use 
Concave-Convex Procedure (CCCP) algorithm \citep{yuille02:_cccp_algor_to_minim_bethe} to
compute the posterior probability instead of the conventional loopy BP algorithm. 
The CCCP has the same extremal solution as the loopy BP algorithm, though the CCCP may
have better convergence property. 
The computation time is shown in Table \ref{table:comp-time}. The CCCP is compared with
the MCMC method. Overall CCCP is efficient for large set of clones. 
We have confirmed that the computation time for bias correction is negligible. 

The bias-correction term $\frac{1}{2}\sum_{r\neq s:r,s\in\mathcal{G}}B_{rsi}$ is added to
the estimated posterior probability given by CCCP. 
The accuracy of the estimator is measured by the Kullback-Leibler (KL) divergence. 
Let $q_i(x_i)$ be the true posterior given by the MCMC method for $n=1314, 1552$. 
For $n=24$, the exact posterior probability is available. 
the discrepancy between $q_i$ and the estimated posterior $\bar{q}_i$ for $i=1,\ldots,n$
is measured by 
\begin{align*}
 \frac{1}{n}\sum_{i=1}^{n}\sum_{x_i\in\{0,1\}}
 q_i(x_i|s)\log\frac{q_i(x_i|s)}{\bar{q}_i(x_i|s)}. 
\end{align*}
In the numerical simulation we conducted the estimation 1000 times with different random
seed, and the KL-divergence is averaged over the repetition. 

Table \ref{table:clone-low}, Table \ref{table:clone-mid}, and Table \ref{table:clone-high}
show the results for each pooling design. 
The first column shows the number of positive clones out of $n$ clones, and the second and
third columns present the averaged KL-divergence for the estimator given by CCCP and its
bias-corrected variant, respectively. 
When $|r\cap s|$ is less than three, the bias correction works well to improve the accuracy
of the estimated posterior  
as shown in Table \ref{table:clone-low} and Table \ref{table:clone-high}. 
Table \ref{table:clone-mid} shows the result using the pooling design satisfying
$|r\cap s|=3$. 
In this case, the bias-correction does not necessarily improve the estimator. 
This result indicates that not only the dominant bias term 
$\frac{1}{2}\sum_{r\neq s}B_{rsi}$
but also the higher order term will be necessary to improve the estimator.  

In the simple experiments, the bias correction may be useful to improve the estimated
posterior when the pooling design $\mathcal{G}$ satisfies 
$|r\cap s|=2$ for $r,s\in \mathcal{G}$. 

\begin{table}
 \begin{center}
  \caption{
  Balanced in complete block (BIB) designs used in the simulation and the prior probability
  are shown. In our context the conventional notation $(v,r,b,k,\lambda)$
  for $\mathrm{BIBD}(v,r,b,k,\lambda)$  
  corresponds to 
  $(m,|r|,n,nm/|r|,|r\cap s|)$ for $r,s\in \mathcal{G},\,r\neq s$, 
  where $|r|$ and $|r\cap s|$ take a constant number. 
  The identical BIB designs are combined to make larger pooling design. 
  When the base design is 
  $\mathrm{BIBD}(v,r,b,k,\lambda)$ and the repetition is $t$, 
  the pooling design defined from 
  $\mathrm{BIBD}(v,r\cdot z,b\cdot t,k,\lambda\cdot t)$ 
  is constructed by combining the base design. 
  In order to build the pooling design such that any pair of clones is not assigned
  exactly the same pools, we applied randomization technique. } 
 \label{table:BIBD_pooling}
  \begin{tabular}{|l|c|c|c|}
    \hline
   \multicolumn{1}{|c|}{$\#$clones}   & base design  & repetition & prior: $p_i(X_i=1)$\\ \hline
   $n=24=12\times 2$     & $\mathrm{BIBD}(9,4,12,3,1)$    & 2 & 0.1    \\ 
   $n=1314=438\times 3$  & $\mathrm{BIBD}(73,24,438,4,1)$ & 3 & 0.002  \\ 
   $n=1552=776\times 2$  & $\mathrm{BIBD}(97,32,776,4,1)$ & 2 & 0.002  \\ \hline
  \end{tabular}
 \end{center}
 \begin{center}
  \caption{The conditional probability estimated by 
  the experiments of an actual DNA library screening
  \citep{knill96:_inter_of_poolin_exper_using}. }
 \label{table:observation_prob}
 \begin{tabular}{|c||c|} 
 \hline 
 $P(S_r=0|Z_r=0)=0.871$\  & $P(S_r=0|Z_r=1)=0.05$ \\
 $P(S_r=1|Z_r=0)=0.016$\  & $P(S_r=1|Z_r=1)=0.11$ \\
 $P(S_r=2|Z_r=0)=0.035$\  & $P(S_r=2|Z_r=1)=0.27$ \\
 $P(S_r=3|Z_r=0)=0.078$\  & $P(S_r=3|Z_r=1)=0.57$ \\  
\hline 
\end{tabular}
 \end{center}
 \begin{center}
  \caption{Estimated posterior probability in the preliminary experiments. 
  The estimated probability using loopy BP algorithm is compared to the true posterior
  computed by the Markov Chain Monte Carlo (MCMC) method
  \cite{knill96:_inter_of_poolin_exper_using}.  
  The result is shown in the descending order of the marginal posterior probability. 
  In both methods almost the same clones are highly placed. 
}
 \label{table:posterior-exam}
 \begin{tabular}{|c||c|c||c|c|} \hline
     &  \multicolumn{2}{c||}{loopy BP}  & \multicolumn{2}{c|}{MCMC} \\\hline
rank & clone id &  posterior            & clone id & posterior \\\hline
  1  &  336     & 0.8393                & 336      & 0.8345 \\
  2  &  768     & 0.0615                & 768      & 0.0628 \\
  3  &  125     & 0.0574                & 125      & 0.0608 \\
  4  &  764     & 0.0419                &  81      & 0.0400 \\
  5  &  81      & 0.0409                & 764      & 0.0382 \\\hline
\end{tabular}
\end{center}
  \begin{center}
  \caption{
   The computation time (second) is shown. The CCCP is compared with
   the MCMC method. Overall CCCP is efficient for the large set of clones. 
   We have confirmed that the computation time for bias correction is negligible. } 
  \label{table:comp-time}
   \begin{tabular}{|c|c|c|} \hline 
 $n$  & CCCP  &  MCMC   \\ \hline
981   & 0.22  &  1.91   \\
1298  & 0.27  &  2.49   \\
3088  & 0.81  &  8.49   \\
6371  & 1.68  &  17.60  \\
10121 & 3.33  &  27.73  \\
30050 &11.09  &  81.80  \\ \hline
  \end{tabular}
 \end{center}
\end{table}

\begin{table}
 \begin{center}
  \caption{
  The numerical results for pooling design such that 
  $n=24,\ m=9,\ |r|=8,\ |r\cap s|=2$ are shown. 
  The prior probability is set to $p_i(X_i=1)=0.1$ for all $i=1,\ldots,n$. 
  The first column shows the number of positive clones out of $n$ clones, and the second and
  third column presents the averaged KL-divergence for the CCCP and its bias-corrected
  variant from the posterior given by the MCMC method, respectively. }
  \label{table:clone-low}
  \begin{tabular}{|c|c|c|} \hline
   \#   positive &  {\qquad\  CCCP\qquad\  } &     bias-corrected CCCP\\\hline
   1 &  11.67e-04 & 6.67e-04 \\
   2 &  10.57e-04 & 6.01e-04 \\
   3 &  7.020e-04 & 4.26e-04 \\
   4 &  4.160e-04 & 2.76e-04 \\\hline
  \end{tabular}
  \caption{
  The numerical results for pooling design such that 
  $n=1314,\ m=73,\ |r|=72,\ |r\cap s|=3$ are shown. 
  The prior probability is set to $p_i(X_i=1)=0.002$ for all $i=1,\ldots,n$. 
  }
  \label{table:clone-mid}
  \begin{tabular}{|c|c|c|} \hline
   \# positive &  {\qquad\ CCCP\qquad\ } &     bias-corrected CCCP\\\hline
   1 & 3.80\,e-05 & 2.40\,e-05\\ 
   2 & 1.80\,e-05 & 1.80\,e-05\\ 
   3 & 10.1\,e-05 & 14.3\,e-05\\ 
   4 & 4.80\,e-05 & 5.20\,e-05\\ \hline
  \end{tabular}
  \caption{
  The numerical results for pooling design such that 
  $n=1552,\ m=97,\ |r|=64,\ |r\cap s|=2$ are shown. 
  The prior probability is set to $p_i(X_i=1)=0.002$ for all $i=1,\ldots,n$. 
  }
  \label{table:clone-high}
  \begin{tabular}{|c|c|c|} \hline
   \#   positive &  {\qquad\ CCCP\qquad\ } &     bias-corrected CCCP\\\hline
   1 &  0.90\,e-05 & 0.90\,e-05 \\ 
   2 &  1.70\,e-05 & 1.50\,e-05 \\ 
   3 &  3.30\,e-05 & 1.90\,e-05 \\ 
   4 &  2.80\,e-05 & 2.60\,e-05 \\ \hline 
  \end{tabular}
 \end{center}
\end{table}

\section{Concluding Remarks}
\label{sec:ConcludingRemarks}
For the pooling design we have proposed the bias corrected estimator of the marginal
posterior probability based on the result of 
\cite{ikeda04:_infor_geomet_of_turbo_and,ikeda04:_stoch_reason_free_energ_and_infor_geomet}. 
We analyzed an upper bound of the bias term and showed that BIB design will make the bias
small comparing to other pooling designs. 
In numerical experiments, the bias correction works well to improve the marginal 
posterior, even when $|r\cap s|=2$ holds for the pooling design $\mathcal{G}$. 
We confirmed that the correction of the dominant bias term does not necessarily improve 
the estimator, when the pooling design satisfies $|r\cap s|=3$. 
Investigating higher order bias correction will be an important future
work. 

\appendix

\section{Calculation of $B_{rsi}$}  
\label{appendix:B_calc} 
Theorem \ref{theorem:packing-optimal} is obtained as a direct conclusion of Theorem 7 in
\cite{ikeda04:_infor_geomet_of_turbo_and}. 
Here, we compute $B_{rsi}$ to show its explicit form, and verify that $B_{rsi}=0$ for
$|r\cap s|\leq 1$. 

As shown in \eqref{eqn:c_function_canonical_form} and \eqref{eqn:cr-other-expression}, any 
function $c_r(x)$ over $\{0,1\}^n$ 
depending on only $x_i, i\in r$ is represented by the linear sum of the functions having
the form of $\prod_{i\in r}(x_i-a_i)$, where 
$a_i\in [0,1]$.  
Moreover, the bias term $B_{rsi}$ is bilinear in $c_r(x)-\bar{c}_r$ and
$c_s(x)-\bar{c}_s$. Therefore, it is enough to consider the case that $c_r$ and $c_s$ are
given as $c_r(x)=\prod_{i\in r}(x_i-a_{ri})$ and $c_s(x)=\prod_{i\in s}(x_i-a_{si})$ 
for $a_{ri}, a_{si}\in [0,1]$. 

Let us define $e_r(r')$ for the subset $r'\subset r$ by 
\begin{align*}
 e_r(r')=\prod_{i\in r'}(\bar{x}_i-a_{ri}). 
\end{align*}
Let $E[\,\cdot\,]$ be the expectation by the probability $\bar{q}(x;\theta_0)$, 
then we have $E[c_r]=e_r(r)$. 
Building blocks for the calculation of the bias term are given as follows. 
The matrix $g_{ij}, g_{ir}$ and $\widetilde{g}_{ir}$ are given as
\begin{align*}
 g_{ii}&=\bar{x}_i(1-\bar{x}_i),\quad
 g_{ir}=
 \begin{cases}
  0& i\not\in r,\\
  \bar{x}_i(1-\bar{x}_i) e_r(r\setminus\{i\}) & i\in r, 
 \end{cases}\\
 \therefore\ \ \ 
 \widetilde{g}_{ir}&=\frac{g_{ir}}{g_{ii}}=
 \begin{cases}
  0& i\not\in r,\\
  e_r(r\setminus\{i\}) & i\in r. 
 \end{cases}
\end{align*}
The third tensor $T_{ijk}$ is computed as follows:
\begin{align*}
 T_{ijk}&=\bar{x}_i(1-\bar{x}_i)(1-2\bar{x}_i)\delta_{ij}\delta_{ik},\\
 \therefore\ \ 
 \sum_{jk}T_{ijk}\widetilde{g}_{jr}\widetilde{g}_{ks}&=
 \begin{cases}
 \bar{x}_i(1-\bar{x}_i)(1-2\bar{x}_i)e_r(r\setminus\{i\})e_s(s\setminus\{i\}) 
  & i\in r\cap s,\\
  0  & \text{otherwise}
 \end{cases}
\end{align*}

For $T_{ijr}$ we see that $T_{ijr}=0$ when $i\not\in r$ or $j\not\in r$ holds. 
Then, we compute $T_{ijr}$ for $i,j\in r$. When $i=j\in r$ holds, we have  
\begin{align*}
 T_{iir}
 &=E[(x_i-\bar{x}_i)^2(c_r(x)-\bar{c}_r)]\\
 &=E[(x_i-\bar{x}_i)^2(x_i-a_{ri})]e_r(r\setminus\{i\})-\bar{x}_i(1-\bar{x}_i)e_r(r)\\
 &= \bar{x}_i(1-\bar{x}_i)(1-2\bar{x}_i)e_r(r\setminus\{i\}). 
\end{align*}
In the same way, we obtain 
\begin{align*}
 T_{ijr}=
 \begin{cases}
 \bar{x}_i(1-\bar{x}_i)(1-2\bar{x}_i)  e_r(r\setminus\{i\})  & i=j\in r\\
  \bar{x}_i(1-\bar{x}_i)\bar{x}_j(1-\bar{x}_j)e_r(r\setminus\{i,j\}) & i,j\in r, i\neq j
 \end{cases}
\end{align*}
Note that
\begin{align*}
 \sum_{j}(T_{ijr}\widetilde{g}_{js}+T_{ijs}\widetilde{g}_{jr})
 &=\sum_{j\in r\cap s}(T_{ijr}\widetilde{g}_{js}+T_{ijs}\widetilde{g}_{jr})
\end{align*}
holds. For $i\not\in r\cup s$, the equality $T_{ijs}=T_{ijr}=0$ holds, and for 
$r\cap s=\emptyset$ we have $T_{ijs}\widetilde{g}_{jr}=0$. Thus, 
\begin{align*}
i\not\in r\cup s\ \ \text{or}\ \ r\cap s=\emptyset\  \Longrightarrow \ 
\sum_{j}(T_{ijr}\widetilde{g}_{js}+T_{ijs}\widetilde{g}_{jr})=0
\end{align*}
holds. 
Then for $r\cap s\neq \emptyset$ we consider other two cases: $i\in r\setminus s$
(or $i\in s\setminus r$) and $i\in r\cap s$. Some calculation leads to 
\begin{align*}
  i\in r\backslash s:&\ \ 
  \sum_{j\in r\cap s}(T_{ijr}\widetilde{g}_{js}+T_{ijs}\widetilde{g}_{jr})
  ~=~
  \bar{x}_i(1-\bar{x}_i)\sum_{j\in r\cap s}\bar{x}_j(1-\bar{x}_j) e_r(r\backslash\{i,j\}) e_s(s\backslash \{j\}),\\
  i\in r\cap s:&\ \ 
 \sum_{j\in r\cap s}(T_{ijr}\widetilde{g}_{js}+T_{ijs}\widetilde{g}_{jr})\\
 &=
 2\bar{x}_i(1-\bar{x}_i)(1-2\bar{x}_i) e_r(r\backslash\{i\})  e_s(s\backslash\{i\}) \\
 &\phantom{=} +\bar{x}_i(1-\bar{x}_i) 
 \!\!\!\sum_{j\in r\cap s\backslash\{i\}} \bar{x}_j(1-\bar{x}_j)
  \big[e_r(r\backslash\{i,j\})e_s(s\backslash\{j\})+e_s(s\backslash\{i,j\})e_r(r\backslash\{j\})\big]. 
\end{align*}

For $T_{irs}$, we see that 
\begin{align*}
i\not\in r\cup s \ \ \text{or}\ \ r\cap s=\emptyset \ \Longrightarrow \ T_{irs}=0 
\end{align*}
holds. Under the condition that $r\cap s\neq\emptyset$, we consider other two cases: 
$i\in r\setminus s$ and $i\in r\cap s$, 
\begin{align*}
 i\in r\backslash s:&\\
 T_{irs}
 &=
 \bar{x}_i(1-\bar{x}_i) \bigg[
 e_r(r\backslash (s\cup \{i\}))e_s(s\backslash r)
 \prod_{j\in r\cap s}\{\bar{x}_j(1-\bar{x}_j)+(\bar{x}_j-a_{rj})(\bar{x}_j-a_{sj})\}\\
 &\qquad 
 -e_r(r\backslash\{i\})e_s(s)\bigg],\\
 i\in r\cap s:&\\
 T_{irs}
 &=
 \bar{x}_i(1-\bar{x}_i)
 \bigg[ (1-a_{ri}-a_{si}) e_r(r\backslash s)e_s(s\backslash r)
 \!\!\!\prod_{j\in r\cap s\backslash\{i\}}\!\!
 \{\bar{x}_j(1-\bar{x}_j)+(\bar{x}_j-a_{rj})(\bar{x}_j-a_{sj})\}\\
 &\qquad -e_r(r)e_s(s\backslash\{i\}) -e_r(r\backslash\{i\})e_s(s)
\bigg]. 
\end{align*}

We show $B_{rsi}$ below. Remember that 
\begin{align*}
 B_{rsi}
 =-T_{irs}-\sum_{j,k=1}^n T_{ijk}\widetilde{g}_{jr}\widetilde{g}_{ks}
 +\sum_{j=1}^{n}(T_{ijr}\widetilde{g}_{js}+T_{ijs}\widetilde{g}_{jr}). 
\end{align*}
Then, we have
\begin{align*}
 r\cap s=\emptyset\ \ \text{or}\ \ i\not\in r\cap s\ \ \Longrightarrow \ \ B_{rsi}=0, 
\end{align*}
since all terms vanish. 
To compute other cases, we use the formula, 
\begin{align*}
 &\phantom{=}
 \prod_{j\in r\cap s}\{\bar{x}_j(1-\bar{x}_j)+(\bar{x}_j-a_{rj})(\bar{x}_j-a_{sj})\}\\
 &=
 \sum_{k=0}^{|r\cap s|}
 \sum_{\substack{v\subset r\cap s\\ |v|=k}}
 e_r((r\cap s)\setminus v) \cdot e_s((r\cap s)\setminus v)
 \cdot\prod_{j\in v}\bar{x}_j(1-\bar{x}_j), 
\end{align*}
and so forth. 
Then $B_{rsi}$ is represented as follows, 
\begin{align*}
 i\in r\backslash s:&\ \ 
 B_{rsi}=
 -\bar{x}_i(1-\bar{x}_i)
 \sum_{k=2}^{|r\cap s|}\sum_{\substack{v\subset r\cap s\\ |v|=k}}\ 
 e_r(r\backslash (v\cup\{i\}))e_s(s\setminus v)
 \prod_{j\in v}\bar{x}_j(1-\bar{x}_j),\\
 i\in r\cap s:&\ \ 
 B_{rsi}=
 \bar{x}_i(1-\bar{x}_i) 
 \bigg[
 (2\bar{x}_i-1)\sum_{k\in r\cap s\backslash\{i\}}\bar{x}_k(1-\bar{x}_k) 
 e_r(r\backslash\{i,k\})e_s(s\backslash\{i,k\})\\
 &\quad 
 -(1-a_{ri}-a_{si}) \sum_{k=2}^{|r\cap s|-1}
 \sum_{\substack{v\subset r\cap s\backslash\{i\}\\ |v|=k}}
 \!\!\!\!
 e_r(r\backslash(\{i\}\cup v))e_s(s\backslash(\{i\}\cup v))
\prod_{j\in v}\bar{x}_j(1-\bar{x}_j)\bigg]. 
\end{align*}
Paying attention to the summation, we see that $B_{rsi}$ vanishes for $|r\cap s|\leq1$.

\section{Upper bound of $|B_{rsi}|$}
\label{appendix:Upperbound_Brsi}
First we derive an upper bound of $|B_{rsi}|$ for 
$c_r(x)=\prod_{i\in r}(x_i-a_{ri})$ and $c_s(x)=\prod_{i\in s}(x_i-a_{si})$. 
Suppose that there exists $\delta$ such that 
$\displaystyle 0< |\bar{x}_i-a_i|\leq \delta<1$ holds for all $i=1,\ldots,n$ and 
all $a_i$ appearing in $c_r$ and $c_s$. 
Then, we obtain an upper bound of $ |B_{rsi}|$. 
We use $\bar{x}_i(1-\bar{x}_i)\leq 1/4$. 
For $i\in r\backslash s$ we have
\begin{align*}
 |B_{rsi}| \leq 
 \frac{1}{4}\sum_{k=2}^{|r\cap s|}\sum_{\substack{v\subset r\cap s\\ |v|=k}}
 \delta^{|r|-1-k}\delta^{|s|-k}\left(\frac{1}{4}\right)^k
 \leq \frac{\delta^{|r|+|s|-1}}{4}\bigg(1+\frac{1}{4\delta^2}\bigg)^{|r\cap s|}
\end{align*}
and in the same way for $i\in r\cap s$ we obtain
\begin{align*}
 |B_{rsi}|
 \leq 
 \frac{\delta^{|r|+|s|-2}}{4}
 \bigg[
 \frac{|r\cap s|-1}{4\delta^2}
 +\bigg(1+\frac{1}{4\delta^2}\bigg)^{|r\cap s|-1}
 \bigg]
 \leq  
 \frac{\delta^{|r|+|s|-2}}{2}
 \bigg(1+\frac{1}{4\delta^2}\bigg)^{|r\cap s|-1}. 
\end{align*}
Therefore, for any case we have
\begin{align*}
 |B_{rsi}|\leq 
 \frac{\delta^{|r|+|s|-2}}{2}
 \bigg(1+\frac{1}{4\delta^2}\bigg)^{|r\cap s|}. 
\end{align*}
Next we suppose that
\begin{align*}
 c_r(x)&=h_r+\sum_{\ell}b_{r\ell}\prod_{i\in r}(x_i-a_{r\ell i}),\\
 c_s(x)&=h_s+\sum_{\ell'}b_{s\ell'}\prod_{i'\in s}(x_{i'}-a_{s\ell' i'}). 
\end{align*}
Since $B_{rsi}$ is bilinear in $c_r(x)-\bar{c}_r$ and $c_s(x)-\bar{c}_s$, 
we have 
\begin{align*}
 |B_{rsi}|
 \leq 
 \sum_{\ell,\ell'}\big|b_{r\ell}b_{s\ell'}\big|\cdot
 \frac{\delta^{|r|+|s|-2}}{2}
 \bigg(1+\frac{1}{4\delta^2}\bigg)^{|r\cap s|}
 \leq 
 C\cdot
 \frac{\delta^{|r|+|s|-2}}{2}
 \bigg(1+\frac{1}{4\delta^2}\bigg)^{|r\cap s|}. 
\end{align*}

\bibliographystyle{jstpip}
\bibliography{dna-lib}

\end{document}